\newtheorem{thm}{Theorem}
\newtheorem{lemma}[thm]{Lemma}
\newtheorem{defn}[thm]{Definition}
\crefname{lemma}{lemma}{lemmas}
\Crefname{lemma}{Lemma}{Lemmas}
\crefname{thm}{theorem}{theorems}
\Crefname{thm}{Theorem}{Theorems}
\crefname{defn}{definition}{definitions}
\Crefname{defn}{Definition}{Definitions}
\DeclareRobustCommand{\T}{\intercal}
\newcommand{\tr}[1]{\mathrm{tr}(#1)}
\newcommand{\qtilde}[0]{\tilde{\bm{q}}}
\newcommand{\XX}{\mbox{\tiny \it XX'}}
\title{Colored Noise Mechanism for Differentially Private Clustering}
\name{Nikhil Ravi$^*$, Anna Scaglione$^*$ and Sean Peisert$^\dagger$\thanks{This research was supported by the Director, Cybersecurity, Energy Security, and Emergency Response, Cybersecurity for Energy Delivery Systems program, of the U.S. Department of Energy, under contract DE-AC02-05CH11231.  Any opinions, findings, conclusions, or recommendations expressed in this material are those of the authors and do not necessarily reflect those of the sponsors of this work.}}
\address{$^*$Cornell Tech, Department of Electrical and Computer Engineering, New York, NY, USA,\\
$^\dagger$ Lawrence Berkeley National Lab, Berkeley, CA, USA.}
\begin{document}
%
\maketitle
\begin{abstract}
The goal of this paper is to propose and analyze a differentially private randomized mechanism for the $K$-means query. The goal is to ensure that the information received about the cluster-centroids is differentially private. The method consists in adding Gaussian noise with an optimum covariance. The main result of the paper is the analytical solution for the optimum covariance as a function of the database.  Comparisons with the state of the art prove the efficacy of our approach. 
\end{abstract}
\begin{keywords}
differential privacy, clustering, colored noise mechanism
\end{keywords}
\section{Introduction}
\label{sec:intro}
Differential privacy (DP)~\cite{dwork2006calibrating} consists of randomized methods that allow to publish the output of data queries, while guaranteeing that the answers are statistically unlikely to reveal information about attributes of the data $X$.  Instead of releasing the result of the query, DP acts as a {\it guard} by perturbing randomly the query response. DP is also capable of ceasing to respond to repeated queries when a preset {\it privacy budget} is reached.  DP methods do not perturb or release the data directly. They are tailored to the specific query.

\textit{Clustering} algorithms are among the most common unsupervised learning techniques  (see e.g., \cite{xu2005survey} for a survey). In many cases, by summarizing in a small set of  data patterns the emerging trends in a large detailed collection, the clustering query can provide sufficient information to improve services offering from entities that do not have permission to observe the data directly. Just like publishing an average, publishing the average behavior of a data cluster for a certain set of data $X$ in a database ${\cal X}$ leaks private information, motivating the use of the DP framework for their release. 
With this in mind, this paper presents a novel approach for applying  to the publication of the centroids computed by the $K$-means algorithm.\\ 
{\bf Prior art} -- Generic differentially private clustering techniques have been previously presented in e.g.~\cite{balcan2017differentially,xia2020distributed,lu2020differentially}. The authors in \cite{balcan2017differentially} proposed an iterative $K$-means clustering algorithm for data in high-dimensional Euclidean spaces. In \cite{xia2020distributed}, the authors proposed a local DP iterative clustering algorithm where noise is added at the user's end before transmitting the data to the aggregator. While guaranteeing DP, these techniques may not-converge. Instead, the authors in \cite{lu2020differentially} proposed a clustering algorithm that performs an input perturbation in each iteration, which offers convergence guarantees but drives the cost of DP higher depending on the number of iteration required for convergence.\\ 
{\bf Contribution} -- The paper introduces the design of DP mechanism  to publish cluster centroids by adding to them Gaussian noise with an optimum covariance.  The proposed method is not iterative and provides greater accuracy for a given privacy budget. The efficacy of the proposed mechanism is tested on samples drawn from the Marketing Campaign dataset~\cite{marketingdataset}.\\
{\bf Paper organization}-- In \Cref{sec:prelims}, we introduce the DP framework before setting up the problem statement. In \Cref{sec:additive}, we describe a DP mechanism for the publication of the clustering query. In \Cref{sec:numericals}, we test numerically our algorithms, before concluding the paper in \Cref{sec:conclusion}.\\
{\bf Notation}-- Boldfaced lower-case (upper-case respectively) letters denote vectors (matrices respectively) and  $x_i$ ($X_{ij}$ respectively) denotes the $i$\textsuperscript{th} element of a vector $\bm{x}$ (the $ij$\textsuperscript{th} entry of a matrix $\bm{X}$ respectively). Calligraphic letters denote sets and $|\cdot|$ their cardinality. Finally, $[N]$ denotes the set of integers $\{0,1,\ldots,N-1\}$.

\section{Preliminaries and Problem Statement}\label{sec:prelims}
In the following, we denote by $X$ a set of feature vectors $\bm x_p$, $p\in [P]$ embedded in $\mathbb{R}^d$ that are in a database $\mathcal{X}$. To review the basic concepts, we set the problem in general terms, denoting by $\mathbb{q}(X)$ the function mapping $X$ onto the query answer, with outcome denoted by  $\bm{q} \in \mathcal{Q}$, where $\mathcal{Q}$ is its domain.

\subsection{Differential Privacy}\label{sec:dp_prelims}
A DP randomized algorithm applied to a given query makes it either difficult or impossible to tell if the data $X$ or $X'$, which is missing one feature vector relative to $X$, was queried. We denote the DP query answer by $\tilde{\mathbb{q}}(X)$, and has a random outcome $\qtilde \in \mathcal{Q}$, with distribution $f(\qtilde|X)$ (the probability density function for continuous random queries and the probability mass function for discrete random variables). 
We briefly introduce the conventional definitions that explain how differential privacy is measured and established. The first and the most widespread definition of differential privacy was introduced in \cite{dwork2006calibrating,dwork2006our} and is referred to as $(\epsilon,\delta)$-DP. 
The one that we follow was introduced by \cite{machanavajjhala2008privacy}, referred to as $(\epsilon,\delta)$-Probabilistic Differential privacy (PDP).  It can be shown that $(\epsilon,\delta)$-PDP is a strictly stronger condition than $(\epsilon,\delta)$-DP.
\begin{defn}[$(\epsilon,\delta)$-Probabilistic Differential privacy]\label[defn]{def:probabilisticDP} The so-called privacy leakage function  $L_{\XX}$ is the log-likelihood ratio between the two hypotheses that the query outcome $\qtilde$ is the answer generated by the data $X$ or the data $X'$ that differ by one element. Mathematically:
\begin{equation}
    L_{\XX}(\qtilde):=\log \frac{f(\qtilde|X)}{f(\qtilde|X')},
\end{equation}
  A randomized mechanism $\tilde{\mathbb{q}}(X)$ is $(\epsilon,\delta)$ PDP for $X$ iff:
  \begin{equation}\label{eq:def1}
      \sup_{X'}~Pr\left(L_{\XX}(\qtilde) >\epsilon\right)\leq \delta.
  \end{equation}
\end{defn}
\begin{thm}[PDP implies DP~\cite{mcclure2015relaxations}]\label[thm]{thm:PDP-DP}
If a randomized mechanism is $(\epsilon,\delta)$-PDP, then it is also $(\epsilon,\delta)$-DP, i.e.,
\[
    (\epsilon,\delta)-\text{PDP}  \Rightarrow (\epsilon,\delta)-\text{DP}, \text{ but } (\epsilon,\delta)-\text{DP} \nRightarrow (\epsilon,\delta)-\text{PDP}.
\]
\end{thm}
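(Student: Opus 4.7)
The plan is to unpack both definitions and show that the standard $(\epsilon,\delta)$-DP inequality, namely $\Pr(\tilde{\mathbb{q}}(X)\in S)\le e^\epsilon \Pr(\tilde{\mathbb{q}}(X')\in S)+\delta$ for every measurable $S\subseteq\mathcal{Q}$ and every neighboring pair $X,X'$, follows from the PDP tail bound on $L_{\XX}$. The natural approach is the ``good set / bad set'' decomposition that is standard in the DP literature, applied pointwise to the log-likelihood ratio.

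Concretely, fix any neighboring $X,X'$ and any measurable $S\subseteq\mathcal{Q}$. I would define the ``bad'' set $B:=\{\tilde{q}\in\mathcal{Q}:L_{\XX}(\tilde{q})>\epsilon\}$, on which the privacy leakage is too large, and its complement $\stcomp{B}$, where $f(\tilde{q}|X)\le e^\epsilon f(\tilde{q}|X')$ holds pointwise by construction. Splitting $S=(S\cap \stcomp{B})\cup (S\cap B)$ and integrating the density $f(\cdot|X)$, I would bound the first piece by pulling out the factor $e^\epsilon$ inside the integral,
\begin{equation*}
\int_{S\cap \stcomp{B}} f(\tilde{q}|X)\,d\tilde{q} \le e^\epsilon \int_{S\cap \stcomp{B}} f(\tilde{q}|X')\,d\tilde{q} \le e^\epsilon \Pr(\tilde{\mathbb{q}}(X')\in S),
\end{equation*}
and bound the second piece using the PDP hypothesis \eqref{eq:def1},
\begin{equation*}
\int_{S\cap B} f(\tilde{q}|X)\,d\tilde{q} \le \Pr(L_{\XX}(\tilde{q})>\epsilon) \le \delta.
\end{equation*}
Summing the two bounds yields the $(\epsilon,\delta)$-DP inequality, and since $X,X',S$ were arbitrary the mechanism is $(\epsilon,\delta)$-DP.

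There is really no hard step here: the only subtlety is that the decomposition requires a common dominating measure so that the Radon--Nikodym ratio $f(\tilde{q}|X)/f(\tilde{q}|X')$ is well-defined, which the definition already assumes by writing densities. The second item in the theorem statement (that the converse fails) can be supported by pointing to the standard counterexample in \cite{mcclure2015relaxations}: a mechanism satisfying $(\epsilon,\delta)$-DP may place all of its ``slack'' $\delta$ on events where the likelihood ratio is actually infinite, which is consistent with DP but violates PDP whenever $\delta'<\delta$. I would simply cite the reference rather than construct such a mechanism here, since the forward implication is what the rest of the paper uses.
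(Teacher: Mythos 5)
Your argument is correct, but note that the paper itself offers no proof of this theorem at all: it is imported wholesale by citation to \cite{mcclure2015relaxations}, so there is no in-paper derivation to compare against. What you have written is the standard ``good set / bad set'' proof from the DP literature, and it is sound: defining $B:=\{\qtilde: L_{\XX}(\qtilde)>\epsilon\}$, bounding $\int_{S\cap \stcomp{B}}f(\qtilde|X)\,d\qtilde$ by $e^\epsilon \pr(\tilde{\mathbb{q}}(X')\in S)$ via the pointwise inequality $f(\qtilde|X)\le e^\epsilon f(\qtilde|X')$ on $\stcomp{B}$, and bounding $\int_{S\cap B}f(\qtilde|X)\,d\qtilde$ by $\delta$ via \eqref{eq:def1}, does yield $(\epsilon,\delta)$-DP. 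Two small points are worth making explicit. First, the probability in \eqref{eq:def1} must be read as taken under the law $f(\cdot|X)$ of the output when $X$ is queried; your second bound silently relies on this, and it is the intended reading. Second, the $(\epsilon,\delta)$-DP definition quantifies over \emph{ordered} neighboring pairs, so to conclude DP in full you need the PDP condition to hold with the roles of $X$ and $X'$ exchanged as well (i.e., a tail bound on $L_{\mbox{\tiny \it X'X}}$ under $f(\cdot|X')$); this is covered because the PDP hypothesis is asserted for every dataset, but it deserves a sentence. Deferring the non-implication $(\epsilon,\delta)$-DP $\nRightarrow$ $(\epsilon,\delta)$-PDP to the cited counterexample is reasonable and consistent with what the paper itself does.
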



 \Cref{def:probabilisticDP} has an direct statistical interpretation: if values of $(\epsilon,\delta)$ are close to zero, even when one adopts the optimum statistical test for the hypotheses that the randomized answer is produced by the datasets $X$ or $X'$, the test produces results that mostly are incorrect or unreliable. Of course, this comes at a cost in terms of accuracy of the answer. 

\subsection{The \texorpdfstring{$K$}{K}-Means Clustering Query} \label{sec:kmeans_prelims}
Let $[P]$ be the set of indices of the $P$ data points in $X$, $\bm{x}_p \in \mathbb{R}^d, p\in [P]$, which we can organize as a  the $P\times d$ matrix $\bm{X} := [\bm{x}_1, \ldots, \bm{x}_P]^\T$. The task of the $k$-means algorithm is to split the dataset, into $K>1$ subsets (clusters) and to assign a label to each point corresponding to the nearest cluster centroid to itself. In other words, the query we are interested in is given by $\mathbb{q}~:~\mathbb{R}^{P \times d}~\rightarrow~[K]^{P}~\times~\mathbb{R}^{K \times d}$. The problem can formally be posed as an optimization problem of the form:
\begin{equation}\label{eq:clus}
  \arg\min_{\mathfrak{C}} ~~ \frac{1}{P} \sum_{\substack{k \in [K]\\
                 p \in \mathcal{C}_k}} \|\bm{x}_p - \bm{c}_k\|_p \quad \text{s.t.} \quad \bigcap_{k \in [K]} \mathcal{C}_k = [P],
\end{equation}
where $\mathfrak{C} = \{\mathcal{C}_1,\ldots,\mathcal{C}_K\}$ is the partition in $K$ clusters and $\bm{c}_k \in \mathbb{R}^d$ is the centroid of a cluster $\mathcal{C}_k$, obtained by averaging the points $\{\bm{x}_p\}_{p\in \mathcal{C}_k}$. The objective in \cref{eq:clus} minimizes the cost of clustering assignment $\mathfrak{C}$ with the constraint as shown, so that every point in the database is assigned a cluster label whose centroid is the closest. 

\section{An Optimized \texorpdfstring{$(\epsilon,\delta)$}{(e,d)}-DP Gaussian Mechanism for \texorpdfstring{$K$}{K}-Means Clustering }\label{sec:additive}
Prior to introducing our optimization in Section \ref{sec:colored-noise}, in the next section we how the most common DP mechanism for continuous queries would perform. 
\subsection{White Gaussian Noise Mechanism}\label{sec:white_gaussian}
The Gaussian noise output perturbation mechanism is a popular option in DP for publishing a variety of statistics. It entails adding a sample of i.i.d. random noise $\bm{\eta}\in \mathbb{R}^d$ prior to publishing a vector query. When applied to the cluster centroids:
\begin{equation}
    \tilde{\bm{c}} := \tilde{\mathbb{c}}(X) = \mathbb{c}(X) + \bm{\eta} = \bm{c} + \bm{\eta},\label{eq:dp_centroid}
\end{equation}
where $\bm{\eta}$ is a sample of random noise and $\tilde{\bm{c}}$ is the DP answer.

\begin{thm}[Cluster centroids are $(\epsilon,\delta)$-DP~\cite{dwork2006calibrating}]\label[thm]{thm:cluster_centroids_DP}
The mechanism in \cref{eq:dp_centroid}, when $\bm{\eta}~\sim~\mathcal{N}(\bm{0},\sigma^2\bm{I})$ provides $(\epsilon,\delta)$-DP for any two neighboring datasets $X$ and $X'$:
\begin{equation}
    \sigma = \frac{\Delta\mathbb{c}}{\epsilon} \sqrt{2\log(2/\delta)}\label{eq:color_gaussian}
\end{equation}
and $\Delta\mathbb{c}$ is the query sensitivity given by:
\begin{equation*}
\Delta\mathbb{c} = \sup_{X'}\|\mathbb{c}(X) - \mathbb{c}(X')\|_2.
\end{equation*}
\end{thm}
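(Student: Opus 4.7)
Since \Cref{thm:PDP-DP} gives PDP $\Rightarrow$ DP, it is sufficient to verify that the Gaussian mechanism in \cref{eq:dp_centroid} satisfies the PDP condition \cref{eq:def1}. Thus the plan is to write out the privacy-leakage function $L_{\XX}(\qtilde)$ explicitly, observe that it is itself a Gaussian random variable under the hypothesis that $\qtilde$ was produced from $X$, and then bound its upper tail by $\delta$ via a standard $Q$-function inequality. The worst-case supremum over neighboring $X'$ is finally absorbed into the sensitivity $\Delta\mathbb{c}$.

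\textbf{Step-by-step sketch.} Fix a neighboring $X'$ and let $\bm{\Delta} := \mathbb{c}(X) - \mathbb{c}(X')$. Because the noise in \cref{eq:dp_centroid} is $\mathcal{N}(\bm{0},\sigma^2\bm{I})$, both $f(\qtilde\mid X)$ and $f(\qtilde\mid X')$ are Gaussian with the same covariance, so a direct expansion of $\|\qtilde-\mathbb{c}(X)\|^2 - \|\qtilde-\mathbb{c}(X')\|^2$ yields
\begin{equation*}
L_{\XX}(\qtilde) \;=\; \frac{1}{\sigma^{2}}(\qtilde-\mathbb{c}(X))^{\T}\bm{\Delta} \;+\; \frac{\|\bm{\Delta}\|_2^{2}}{2\sigma^{2}}.
\end{equation*}
Under the law $f(\qtilde\mid X)$ we have $\qtilde-\mathbb{c}(X) \sim \mathcal{N}(\bm{0},\sigma^2\bm{I})$, hence $L_{\XX}(\qtilde)$ is scalar Gaussian with mean $\|\bm{\Delta}\|_2^{2}/(2\sigma^2)$ and variance $\|\bm{\Delta}\|_2^{2}/\sigma^2$. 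Writing $\Pr(L_{\XX}(\qtilde)>\epsilon)$ as a $Q$-function and applying the inequality $Q(t)\le \tfrac{1}{2}e^{-t^{2}/2}$ gives a sufficient condition on $\sigma$ of the form $\epsilon - \|\bm{\Delta}\|_2^{2}/(2\sigma^{2}) \ge (\|\bm{\Delta}\|_2/\sigma)\sqrt{2\log(1/\delta)}$ (up to constants absorbed into $\log(2/\delta)$). Solving this in the high-privacy regime and using monotonicity in $\|\bm{\Delta}\|_2$ yields the stated $\sigma = (\|\bm{\Delta}\|_2/\epsilon)\sqrt{2\log(2/\delta)}$.

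\textbf{Uniformity over $X'$ and main obstacle.} The mean and variance of $L_{\XX}$ depend on $X'$ only through $\|\bm{\Delta}\|_2$, and both are monotonically increasing in this quantity; therefore the supremum over neighboring $X'$ in \cref{eq:def1} is attained at the maximal mean-shift, which by definition equals $\Delta\mathbb{c} = \sup_{X'}\|\mathbb{c}(X)-\mathbb{c}(X')\|_2$. Substituting $\|\bm{\Delta}\|_2 \leftarrow \Delta\mathbb{c}$ into the sufficient condition above gives exactly \cref{eq:color_gaussian}. The main delicate point I expect is the tail-bound step: one must verify that the linearized sufficient condition $\epsilon \gtrsim (\|\bm{\Delta}\|_2/\sigma)\sqrt{2\log(2/\delta)}$ indeed dominates the quadratic correction $\|\bm{\Delta}\|_2^{2}/(2\sigma^{2})$ in the regime of interest, so that the constant $\sqrt{2\log(2/\delta)}$ (rather than a more conservative one) is the right prefactor. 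Everything else — the Gaussian log-likelihood-ratio algebra and the reduction to sensitivity — is routine.
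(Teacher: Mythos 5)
Your proposal is correct and takes essentially the same route as the paper: \Cref{thm:cluster_centroids_DP} itself is stated without proof (it is cited to Dwork et al.), but the paper's proof of \Cref{thm:cluster_centroids_DP_color} is exactly your argument specialized to $\bm{\Gamma}=\sigma^{-2}\bm{I}$ --- expand $L_{\XX}$, observe it is Gaussian with mean $\|\bm{\Delta}\|_2^2/(2\sigma^2)$ and variance $\|\bm{\Delta}\|_2^2/\sigma^2$, bound the upper tail by $\delta$, take the supremum over $X'$ via $\Delta\mathbb{c}$, and invoke \Cref{thm:PDP-DP}. The ``delicate point'' you flag (that the quadratic mean term must not swamp the linearized condition, so the prefactor $\sqrt{2\log(2/\delta)}$ only suffices in a small-$\epsilon$ regime) is a genuine caveat, but the paper's own proof of the colored case glosses over the same step, so your treatment is no less complete than the source.
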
 

\subsection{Colored Gaussian Noise Mechanism}\label{sec:colored-noise}
Our idea in this paper stems from the fact that it is possible easily to generalize the i.i.d. Gaussian noise mechanism to the case of correlated Gaussian noise:
\begin{equation}
    \tilde{\mathbb{c}}(X) = \mathbb{c}(X) + \hat{\bm{\eta}} ~~\mbox{where}~~\hat{\bm{\eta}} \sim \mathcal{N}(\bm{0}, \bm{\Gamma}^{-1})\label{eq:dp_centroid_color}
\end{equation}
where $\bm{\Gamma}$ is the noise precision matrix, and optimize $\bm \Gamma$ once the $(\epsilon,\delta)$-PDP tradeoff is computed. Intuitive, a different choice of the covariance can better capture how the centroids are collectively placed in $\mathbb{R}^d$. 
The following theorem states the privacy guarantees of this mechanism:
\begin{thm}[Colored Gaussian Noise Mechanism is $(\epsilon,\delta)$-DP]\label[thm]{thm:cluster_centroids_DP_color}
The additive noise mechanism in \cref{eq:dp_centroid_color} provides $(\epsilon,\delta)$-DP for any two neighboring datasets $X$ and $X'$.
\end{thm}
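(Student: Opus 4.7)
The plan is to invoke \Cref{thm:PDP-DP} and establish the stronger $(\epsilon,\delta)$-PDP guarantee directly from \Cref{def:probabilisticDP}. The workhorse is that the log-likelihood ratio of two multivariate Gaussians with the same covariance but different means is a scalar Gaussian, whose tail obeys a clean exponential bound.

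First I would write out the privacy leakage function explicitly. Let $\bm{d}:=\mathbb{c}(X)-\mathbb{c}(X')$. Under hypothesis $X$ we have $\qtilde = \mathbb{c}(X)+\hat{\bm{\eta}}$ with $\hat{\bm{\eta}}\sim\mathcal{N}(\bm{0},\bm{\Gamma}^{-1})$, so the conditional density has the usual quadratic-form exponent. Expanding and cancelling the normalising constants gives
\begin{equation*}
L_{\XX}(\qtilde) \;=\; \tfrac{1}{2}(\qtilde-\mathbb{c}(X'))^{\T}\bm{\Gamma}(\qtilde-\mathbb{c}(X'))\;-\;\tfrac{1}{2}(\qtilde-\mathbb{c}(X))^{\T}\bm{\Gamma}(\qtilde-\mathbb{c}(X)).
\end{equation*}
Substituting $\qtilde-\mathbb{c}(X')=\hat{\bm{\eta}}+\bm{d}$ and $\qtilde-\mathbb{c}(X)=\hat{\bm{\eta}}$ and simplifying, the cross term is all that survives beyond a deterministic offset, leaving $L_{\XX}(\qtilde)=\hat{\bm{\eta}}^{\T}\bm{\Gamma}\bm{d}+\tfrac{1}{2}\bm{d}^{\T}\bm{\Gamma}\bm{d}$.

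Next I would identify the distribution of $L_{\XX}$. Since $\hat{\bm{\eta}}^{\T}\bm{\Gamma}\bm{d}$ is a linear functional of a Gaussian vector, it is itself Gaussian with mean zero and variance $\bm{d}^{\T}\bm{\Gamma}\bm{\Gamma}^{-1}\bm{\Gamma}\bm{d}=\bm{d}^{\T}\bm{\Gamma}\bm{d}$. Writing $s:=\bm{d}^{\T}\bm{\Gamma}\bm{d}$, we have $L_{\XX}\sim\mathcal{N}(s/2,\,s)$. The PDP condition $\Pr(L_{\XX}>\epsilon)\le\delta$ therefore reduces to $Q((\epsilon-s/2)/\sqrt{s})\le\delta$, where $Q$ is the standard Gaussian tail. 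Using the classical bound $Q(x)\le\tfrac{1}{2}\exp(-x^{2}/2)$ for $x\ge 0$ and solving the resulting inequality yields an explicit sufficient condition on $s$ in terms of $\epsilon$ and $\delta$ (structurally identical to the scalar Gaussian mechanism's calibration, cf.\ \cref{eq:color_gaussian}). Taking the supremum over neighboring $X'$ turns this into a condition on $\sup_{X'}\bm{d}^{\T}\bm{\Gamma}\bm{d}$, i.e.\ a $\bm{\Gamma}$-weighted sensitivity of the centroid query.

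Finally, once the PDP inequality is verified for the chosen $\bm{\Gamma}$ (which is precisely what Section \ref{sec:colored-noise} will later optimize over), \Cref{thm:PDP-DP} upgrades the guarantee to $(\epsilon,\delta)$-DP and completes the proof. The main obstacle I anticipate is not the Gaussian algebra, which is routine, but making the dependence on $\bm{\Gamma}$ bookkept carefully so that the resulting condition is exactly the PDP-sensitivity constraint that the optimization in \cref{sec:colored-noise} will exploit; in particular, the variance and mean of $L_{\XX}$ both equal $s$, and one must preserve this coupling rather than bound them separately, otherwise the tradeoff curve degrades and the generalization over the white-noise case in \cref{thm:cluster_centroids_DP} is lost.
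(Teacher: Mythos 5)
Your proposal is correct and follows essentially the same route as the paper: both compute the log-likelihood ratio, reduce it to the scalar Gaussian $\mathcal{N}(s/2,s)$ with $s=\bm{d}^{\T}\bm{\Gamma}\bm{d}$, convert the tail condition $\Pr(L_{\XX}>\epsilon)\le\delta$ into a constraint on the $\bm{\Gamma}$-weighted sensitivity $\sup_{X'}\|\bm{\Gamma}^{1/2}(\mathbb{c}(X)-\mathbb{c}(X'))\|_2$, and then invoke \Cref{thm:PDP-DP} to pass from PDP to DP. If anything, your explicit $Q$-function bound makes the calibration step more transparent than the paper's, which simply asserts the final inequality.
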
 
\begin{proof}
In order to show that the mechanism in \cref{eq:dp_centroid_color} is $(\epsilon,\delta)$-DP, we first show that it is $(\epsilon,\delta)$-PDP. Consider:
\begin{align}
    L_{\XX}(\tilde{\bm{c}}) =  \log \frac{f(\tilde{\mathbb{c}}(X)|X)}{f(\tilde{\mathbb{c}}(X')|X')} = \log \frac{f(\tilde{\bm{c}} - \mathbb{c}(X)|X)}{f(\tilde{\bm{c}} - \mathbb{c}(X')|X')}.
    \end{align}
    With $\tilde{\bm{c}} = \mathbb{c}(X) + \hat{\bm{\eta}}$ and $\hat{\bm{\eta}} \sim \mathcal{N}(\bm{0}, \bm{\Gamma}^{-1})$, we have:
    \begin{align}
        &L_{\XX}(\tilde{\bm{c}}) = \log \frac{\exp^{\left[-\frac{1}{2} \left(\tilde{\bm{c}} - \mathbb{c}(X)\right)^\T \bm{\Gamma} \left(\tilde{\bm{c}} - \mathbb{c}(X)\right)\right]}}{\exp^{\left[-\frac{1}{2} \left(\tilde{\bm{c}} - \mathbb{c}(X')\right)^\T \bm{\Gamma} \left(\tilde{\bm{c}} - \mathbb{c}(X')\right)\right]}}\nonumber\\
        &= \left(\mathbb{c}(X)-\mathbb{c}(X')\right)^\T\bm{\Gamma} \hat{\bm{\eta}} + \frac{\|\bm{\Gamma}^{\frac{1}{2}}\left(\mathbb{c}(X)-\mathbb{c}(X')\right)\|_2^2}{2}.
    \end{align}
    The privacy loss function is a linear transformation of a Gaussian random vector and thus, it is a Gaussian random variable with expectation $0.5 \cdot \|\bm{\Gamma}^{\frac{1}{2}}\left(\mathbb{c}(X)-\mathbb{c}(X')\right)\|_2^2$ and variance $\|\bm{\Gamma}^{\frac{1}{2}}\left(\mathbb{c}(X)-\mathbb{c}(X')\right)\|_2^2$.
In order to prove $(\epsilon,\delta)$-PDP, we have to prove that the privacy leakage function exceeds $\epsilon$ with probability at most $\delta$, i.e.:
\begin{align}
     1 \geq \frac{\Delta\mathbb{c}}{\epsilon} \sqrt{2\log(2/\delta)},\label{eq:color_gaussian_a}
\end{align}
where $\Delta_c$ is the local sensitivity given by:
\begin{align}
    \Delta_c :=  \sup_{X'} \|\bm{\Gamma}^{\frac{1}{2}}\left(\mathbb{c}(X)-\mathbb{c}(X')\right)\|_2.
\end{align}
Thus, the colored noise additive mechanism in \cref{eq:dp_centroid_color} is $(\epsilon,\delta)$-PDP with mean $\bm{0}$ and covariance $\bm{\Gamma}^{-1}$. Finally, the said mechanism is also $(\epsilon,\delta)$-DP from \Cref{thm:PDP-DP}.
\end{proof}

The design of the optimal noise vector hinges on the design of its covariance matrix. Let us define:
\begin{align*}
    \gamma_c\!&:=\!\frac{\epsilon^2}{2\log\left(\frac{2}{\delta}\right)}\!\geq\!\Delta\mathbb{c}^2 \!\geq\!\left(\mathbb{c}(X)\!-\!\mathbb{c}(X')\right)^{\T}\!\bm{\Gamma}\left(\mathbb{c}(X)\!-\!\mathbb{c}(X')\right).
\end{align*}
Minimizing the mechanism error (or the distortion) , meeting the DP guarantees, requires trace ($\mathrm{tr}$) of the noise covariance (the inverse of $\bm \Gamma$), i.e. solving the following optimization:
\begin{align}
    \min_{\bm{\Gamma}}~&\tr{\bm{\Gamma}^{-1}}\label{eq:color_optimization}\\
    \text{s.t.}~&\left(\mathbb{c}(X)-\mathbb{c}(X')\right)^{\T} \bm{\Gamma} \left(\mathbb{c}(X)-\mathbb{c}(X')\right) \leq \gamma_c, \hfill \forall X'\in {\cal X}.\nonumber
\end{align}
In the following lemma, we provide the closed form solution:
\begin{lemma}[Optimal Choice of $\bm{\Gamma}$]\label[lemma]{lem:optimal_Phi}
Let the matrix $\bm C_{\XX}$ contain as its columns all possible $(\mathbb{c}(X)-\mathbb{c}(X'))$, $X'\in {\cal X}$ and let us assume that $\bm C_{\XX}$ is full row rank. Let us assume that the first $Kd$ columns of $\bm C_{\XX}$, corresponding to the set ${\cal D}\subseteq \cal X$ have the smallest norms and are linearly independent, forming the matrix we refer to as $\bm C_{\XX}^*$. The optimization problem in \cref{eq:color_optimization} has a unique solution  and it evaluates to:
\begin{equation}
    \bm{\Gamma}^\star = {\bm{R}}_{\bm{\lambda}^\star}^{-\frac{1}{2}}
\end{equation}
where $\bm \lambda^*$ has only $Kd$ non-zero values which correspond to the constraints associated with the set ${\cal D}$ and:
\begin{equation}
    \bm{R}_{\bm{\lambda}^\star}\!:=\!\sum_{X' \in \mathcal{D}} {\lambda}_{X'}^\star \left(\mathbb{c}(X)\!-\!\mathbb{c}(X')\right)\left(\mathbb{c}(X)\!-\!\mathbb{c}(X')\right)^{\T},
\end{equation}
where $\lambda_{X'}^\star, \forall X'\in {\cal D}$ are the non-zero Lagrange multipliers for the problem in \cref{eq:color_optimization}. Their values are:
\begin{align}
    \lambda^*_i =v_i^{-2} ~~\mbox{where}~~\bm v\triangleq \gamma_c\bm M^{-1}\bm 1, ~~ M_{ij}\triangleq[\bm C_{\XX}^{*\frac{\T}{2}}]^2_{ij}.
\end{align}

\end{lemma}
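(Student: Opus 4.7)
The plan is to treat \cref{eq:color_optimization} as a convex program in $\bm{\Gamma}$ and read off the optimum from its KKT system via Lagrangian duality. The objective $\tr{\bm{\Gamma}^{-1}}$ is strictly convex on the cone of positive-definite matrices, while each constraint is affine in $\bm{\Gamma}$, so the program is convex. Slater's condition holds trivially (take $\bm{\Gamma}=\alpha\bm{I}$ with $\alpha>0$ sufficiently small), hence strong duality applies and the KKT conditions are necessary and sufficient; strict convexity of the objective delivers uniqueness of $\bm{\Gamma}^{\star}$.

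Next I would form the Lagrangian $L(\bm{\Gamma},\bm{\lambda})=\tr{\bm{\Gamma}^{-1}}+\sum_{X'\in\mathcal{X}}\lambda_{X'}\bigl[(\mathbb{c}(X)-\mathbb{c}(X'))^{\T}\bm{\Gamma}(\mathbb{c}(X)-\mathbb{c}(X'))-\gamma_c\bigr]$ with $\lambda_{X'}\geq 0$. Invoking the matrix-calculus identities $\partial_{\bm{\Gamma}}\tr{\bm{\Gamma}^{-1}}=-\bm{\Gamma}^{-2}$ and $\partial_{\bm{\Gamma}}(\bm{d}^{\T}\bm{\Gamma}\bm{d})=\bm{d}\bm{d}^{\T}$, the stationarity condition collapses to $\bm{\Gamma}^{-2}=\sum_{X'}\lambda_{X'}^{\star}(\mathbb{c}(X)-\mathbb{c}(X'))(\mathbb{c}(X)-\mathbb{c}(X'))^{\T}=\bm{R}_{\bm{\lambda}^{\star}}$, which immediately delivers $\bm{\Gamma}^{\star}=\bm{R}_{\bm{\lambda}^{\star}}^{-1/2}$.

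To localise the support of $\bm{\lambda}^{\star}$ I would use a rank argument together with complementary slackness. Since $\bm{\Gamma}^{\star}\succ 0$, $\bm{R}_{\bm{\lambda}^{\star}}$ must have full rank $Kd$, and therefore admits a representation as a sum of at most $Kd$ rank-one pieces associated with linearly independent columns of $\bm{C}_{\XX}$; the sparsest dual optimum thus activates exactly $Kd$ constraints. The prescription to pick the $Kd$ smallest-norm such columns arises from a feasibility check on the inactive inequalities: once $\bm{\Gamma}^{\star}$ is tuned to meet the strong (large-norm) binding directions with equality, the weaker inactive quadratic forms associated with the smaller-norm $\bm{d}_{X'}$'s remain below $\gamma_c$, and the linear-independence assumption on $\bm{C}_{\XX}^{*}$ ensures $\bm{R}_{\bm{\lambda}^{\star}}$ remains invertible. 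Complementary slackness then enforces $(\mathbb{c}(X)-\mathbb{c}(X'))^{\T}\bm{\Gamma}^{\star}(\mathbb{c}(X)-\mathbb{c}(X'))=\gamma_c$ on $\mathcal{D}$.

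The final and most delicate step is to produce the explicit values of $\bm{\lambda}^{\star}$. Writing $\bm{R}_{\bm{\lambda}^{\star}}=\bm{C}_{\XX}^{*}\diag(\bm{\lambda}^{\star})\bm{C}_{\XX}^{*\T}$ and stacking the active equalities yields $\diag\bigl(\bm{C}_{\XX}^{*\T}\bm{R}_{\bm{\lambda}^{\star}}^{-1/2}\bm{C}_{\XX}^{*}\bigr)=\gamma_c\bm{1}$. Substituting the SVD of $\bm{C}_{\XX}^{*}\diag(\bm{\lambda}^{\star})^{1/2}$ to resolve the symmetric square root and then isolating the diagonal, this system — nonlinear in $\bm{\lambda}^{\star}$ — can be linearised under the change of variable $v_i=\lambda_i^{-1/2}$, producing $\bm{M}\bm{v}=\gamma_c\bm{1}$ with $M_{ij}=[\bm{C}_{\XX}^{*\frac{\T}{2}}]^{2}_{ij}$, from which $\bm{v}=\gamma_c\bm{M}^{-1}\bm{1}$ and $\lambda_i^{\star}=v_i^{-2}$. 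I expect this linearisation — reconciling the entrywise-squared matrix $\bm{M}$ with the genuine matrix square root appearing in the stationarity condition — to be the main algebraic obstacle of the proof.
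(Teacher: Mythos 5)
Your route is the paper's route: form the Lagrangian, use stationarity ($\bm{\Gamma}^{-2}=\bm{R}_{\bm{\lambda}}$) to get $\bm{\Gamma}^{\star}=\bm{R}_{\bm{\lambda}^{\star}}^{-1/2}$, argue that only $Kd$ constraints --- the smallest-norm, linearly independent columns of $\bm C_{\XX}$ --- are active, and then solve the active equalities under the substitution $v_i=\lambda_i^{-1/2}$. Your front end is in fact tighter than the paper's: you justify existence and uniqueness via strict convexity of $\tr{\bm{\Gamma}^{-1}}$ on the positive-definite cone plus Slater's condition, whereas the paper only asserts uniqueness conditionally on $\bm{R}_{\bm{\lambda}^{\star}}$ being invertible. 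Your support-localization argument is about as heuristic as the paper's (which reads the active set off the dual function $\tr{2\bm{R}_{\bm{\lambda}}^{1/2}}-\gamma_c\bm 1^{\T}\bm\lambda$); neither constitutes a proof that the binding constraints are exactly the $Kd$ smallest-norm independent columns.

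The genuine gap is precisely where you flag it, and you do not close it. The paper passes from $\diag\bigl(\bm{C}_{\XX}^{*\T}(\bm{C}_{\XX}^{*}\diag(\bm\lambda)\bm{C}_{\XX}^{*\T})^{-1/2}\bm{C}_{\XX}^{*}\bigr)=\gamma_c\bm 1$ to a linear system in $\bm v$ by, in effect, distributing the symmetric square root across the product, i.e.\ using $(\bm{C}^{*}\diag(\bm\lambda)\bm{C}^{*\T})^{-1/2}=\bm{C}^{*-\frac{\T}{2}}\diag(\bm\lambda)^{-1/2}\bm{C}^{*-\frac{1}{2}}$; that identity fails for a generic invertible $\bm{C}^{*}$, since matrix square roots do not factor across non-commuting products. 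Your proposed fix --- ``substitute the SVD of $\bm{C}^{*}\diag(\bm\lambda)^{1/2}$'' --- does not rescue it: writing $\bm{C}^{*}\diag(\bm\lambda)^{1/2}=\bm U\bm\Sigma\bm V^{\T}$, the $i$th active constraint evaluates to $\lambda_i^{-1}\bigl[(\diag(\bm\lambda)^{1/2}\bm{C}^{*\T}\bm{C}^{*}\diag(\bm\lambda)^{1/2})^{1/2}\bigr]_{ii}=\gamma_c$, which remains genuinely nonlinear in $v_i=\lambda_i^{-1/2}$ except in special cases (scalar $Kd=1$, or $\bm{C}^{*}$ with orthogonal columns, where the inner matrix is diagonal and the constraint reduces to $v_i\|\bm d_i\|_2=\gamma_c$). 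So the step you predicted would be ``the main algebraic obstacle'' is indeed the crux; you leave it as an assertion, and the paper's own treatment (``with some algebra'') rests on the same invalid factorization --- the closed form $\lambda_i^{\star}=v_i^{-2}$ with $\bm v=\gamma_c\bm M^{-1}\bm 1$ is not actually derived by either argument.
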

\begin{proof}
The Lagrangian of the optimization in \cref{eq:color_optimization} is:
\begin{align}
    \mathfrak{L}
    &= \tr{\bm{\Gamma}^{-1}} \nonumber\\
    & ~~~+ \sum_{X' \in \mathcal{X}} \!\lambda_{X'} \left[ \tr{\bm{\Gamma}\left(\mathbb{c}(X)\!-\!\mathbb{c}(X')\right) \left(\mathbb{c}(X)\!-\!\mathbb{c}(X')\right)^{\T}}\!-\! \gamma_c\right]\nonumber\\
    &= \tr{\bm{\Gamma}^{-1} + \bm{\Gamma} \bm{R}_{\bm{\lambda}} - \gamma_c\mathrm{diag}(\bm{\lambda})}, \qquad\text{where}
\end{align}
\begin{align}
    \bm{R}_{\bm{\lambda}} := \sum_{X' \in \mathcal{X}} \lambda_{X'} \left(\mathbb{c}(X)\!-\!\mathbb{c}(X')\right)\left(\mathbb{c}(X)\!-\!\mathbb{c}(X')\right)^{\T}.
\end{align}
Let $\bm C_{X X'}$ be the $Kd\times P$ matrix containing all vectors $\left(\mathbb{c}(X)-\mathbb{c}(X')\right)$ for all $X\in {\cal X}$. Note that 
\begin{equation}
    \bm{R}_{\bm{\lambda}}= \bm C_{X X'}\mathrm{diag}(\bm \lambda) \bm C_{X X'}^\T,
\end{equation}
where $\bm \lambda$ contains the Lagrange multipliers. 
The problem has a unique solution if $\bm{R}_{\bm{\lambda}}$ is invertible at the optimum $\bm \lambda$. A necessary condition is that $|\mathcal{X}|=P\geq Kd$ and that there are at least $Kd$ non-zero Lagrange multipliers at the optimum point. 
In this case, the stationary point of the problem is:
\begin{align}
    \frac{\nabla \mathfrak{L}(\bm{\Gamma},\bm{\lambda})}{\nabla \bm{\Gamma}} = 0 \quad \Rightarrow \quad \bm{\Gamma}^\star =  \bm{R}_{\bm{\lambda}^\star}^{-\frac{1}{2}}.\label{eq:color_precision}
\end{align}
Substituting we have:
\begin{align}
    \mathfrak{L}\left(\bm{R}_{\bm{\lambda}^\star}^{-\frac{1}{2}},\bm{\lambda}^\star\right) = \tr{2\bm{R}_{\bm{\lambda}^\star}^{\frac{1}{2}} - \gamma_c\mathrm{diag}(\bm{\lambda}^\star)}.
\end{align}
which makes it clear that the non-zero multipliers (those for which the constraint is tight) should be the smallest columns in $\bm C_{\XX}$ in terms of $L_2$-norm. Since we need at least $Kd$ of them, in $\mathcal{D}\subseteq {\cal X}$ that are linearly independent, we can place them in the matrix ${\bm C}^*_{\XX}$, which is square and invertible, and assume  that these are the first $Kd$ columns of $\bm C_{\XX}$ without loss of generality. 
The individual constraints in \cref{eq:color_optimization}  $\forall X' \in \mathcal{D}$:
\begin{align}
    \left(\mathbb{c}(X)\!-\!\mathbb{c}(X')\right)^{\T}\bm{R}_{\bm{\lambda}}^{-\frac{1}{2}}\left(\mathbb{c}(X)\!-\!\mathbb{c}(X')\right) = \gamma_c\label{eq:color_exact_constraint}
\end{align}
can be rewritten as:
\begin{align}
    &\gamma_c \bm{1} = \mathrm{diag}\left( {\bm{C}_{\XX}^{*\T}} \left({\bm{C}_{\XX}^{*}} \mathrm{diag}(\bm{\lambda}_+^{*}){\bm{C}_{\XX}^{*\T}}  \right)^{-\frac{1}{2}} \bm{C}_{\XX}^{*}\right)\nonumber\\
    &=  \mathrm{diag}\left( {\bm{C}_{\XX}^{*\frac{\T}{2}}} \left(\mathrm{diag}(\bm{\lambda}^{*}_+)\right)^{-\frac{1}{2}} {\bm{C}_{\XX}^{*\frac{1}{2}}}\right)
\label{eq:color_lambda_solution}
\end{align}
and the solution of \eqref{eq:color_lambda_solution} will give $\bm \lambda^*=(\bm \lambda^{*\T}_+,\bm 0^\T)$.
With some algebra, one can express \eqref{eq:color_lambda_solution} as follows:
\begin{align}
   \gamma_c\bm 1&=\bm M \bm v,
\end{align}
where the entries of $\bm v$ are $v_{i}=\lambda_i^{-\frac 1 2},~i\in [Kd]$ and those of $\bm M$ are
$M_{ij}=[{{\bm C}_{\XX}^{*\frac{\T} 2 }}]^2_{ij}$.
Solving for $\bm v=\gamma_c\bm M^{-1}\bm 1$, one obtains the entries of $\bm \lambda_{+}^*$ as $v_i^{-2},~i\in [Kd]$ and can calculate the optimal $\bm \Gamma$ that satisfies the conditions from \cref{eq:color_precision}.
\end{proof}

\section{NUMERICAL SIMULATIONS}\label{sec:numericals}
To illustrate the methodology proposed, we consider the Marketing Campaign dataset~\cite{marketingdataset} which contains data about when $P = 2212$ customers accepted offers of five marketing campaigns with additional information such as income, size of household, amount spent on various products, among others (see~\cite{marketingdataset} for a detailed description). The dataset contains a total of $d=28$ features (excluding the ID of a customer) with 3 categorical columns (marital status, education, and date of registration) which were encoding into numeric form. The objective of the exercise is to predict the demographics of the customers who will respond to a marketing campaign and thereby increasing profits. Such campaigns have become commonplace, and often violate customer privacy~\cite{tucker2014social}. We first find that there are four clusters in this dataset using the elbow method, and this is further reiterated in the scatter plot of the points (embedded in 2-d using Multidimensional Scaling) in \cref{fig:mds}. In \cref{tab:count}, we show the population count of the individual clusters. We then add colored noise to the cluster centroids and reevaluate the labels by assigning the closest noisy cluster centroid to a point as its new label. In \cref{fig:spe_inc_noisy}, we show the noisy clusters in the plot of Customer Spending against their income. The classes can be characterized as medium-income and low-spending (class 0), medium-income and medium-spending (class 1), high-income and high-spending (class 2), and low-income and low-spending (class 3). 
\begin{table}[]
    \centering
    \begin{tabular}{|c|c|c|c|c|}
        \hline
         \textbf{Cluster} &  \textbf{0} & \textbf{1} & \textbf{2} & \textbf{3}\\
         \hline
         \textbf{True} & 586 & 510 & 506 & 610 \\
         \hline
         \textbf{Noisy} & 581 & 483 & 459 & 679 \\
         \hline
    \end{tabular}
    \caption{True and Noisy clustering population count.}
    \label{tab:count}
\end{table}
In \cref{fig:dp}, we compare the performance of white and colored noise mechanisms for various $(\epsilon,\delta)$ pairs, and we observe a performance improvement for the colored noise mechanism over the white noise mechanism and the mechanism mentioned in \cite{balcan2017differentially}\footnote{For the mechanism from \cite{balcan2017differentially}, the $\delta=1$ curve corresponds to $\delta = 0.999$.}. 
Finally, in \crefrange{fig:deals}{fig:promos}, we plot the distribution of the number of deals purchased and total number of promotions accepted by customers belonging to the different clusters, with and without noise added. We can observe that the DP colored noise does not affect the distribution of these counts. As such, the noisy clustering mechanism will still lead to highly accurate inferences that may be drawn from these parameters, such as the conclusion that married customers accept deals and promotions in relatively higher numbers, that the high-income and high-spending customers do not care for deals but are more likely to accept promotion, and that customers in class 0 are more likely to accept deals.
\begin{figure}
     \centering
     \subfloat[][]{\includegraphics[width=0.233\textwidth]{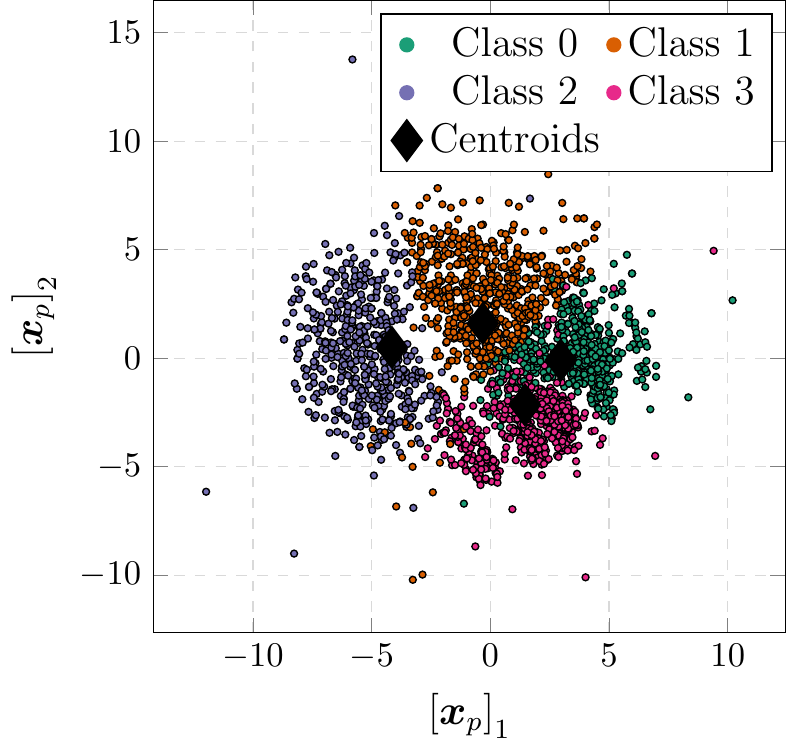}\label{fig:mds}}\quad
     \subfloat[][]{\includegraphics[width=0.227\textwidth]{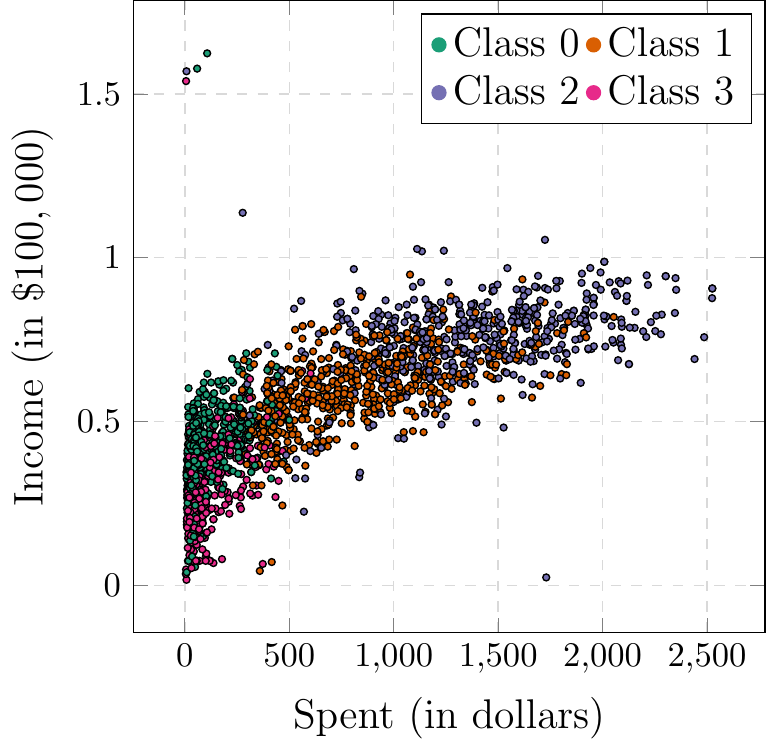}\label{fig:spe_inc_noisy}}
     \caption{(a) Multidimensional Scaling visualization of the data in 2d. (b) Scatter plot of Customer Spending vs Income.}
     \label{fig:scatter}
\end{figure}
\begin{figure}
     \centering
     \includegraphics[width=0.45\textwidth]{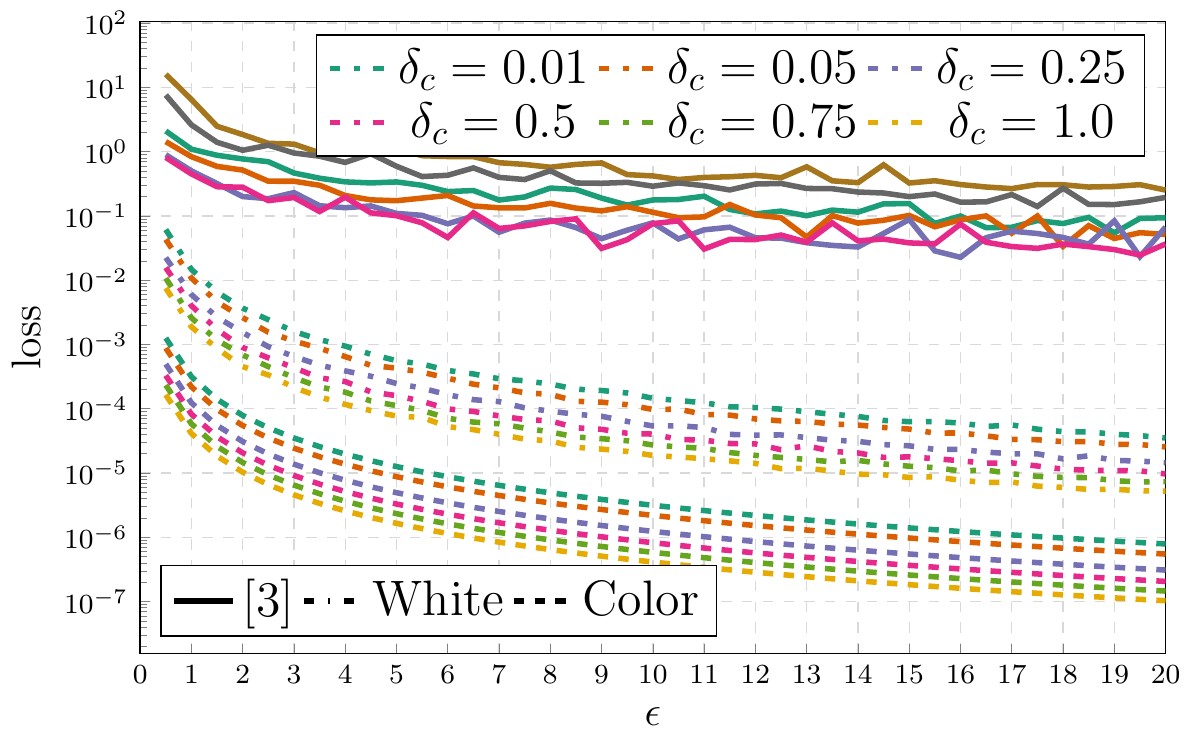}
     \caption{Fractional clustering loss vs $\epsilon$ for various schemes.}
     \label{fig:dp}
\end{figure}
\begin{figure}
     \centering
     \subfloat[][]{\includegraphics[width=0.31\textwidth]{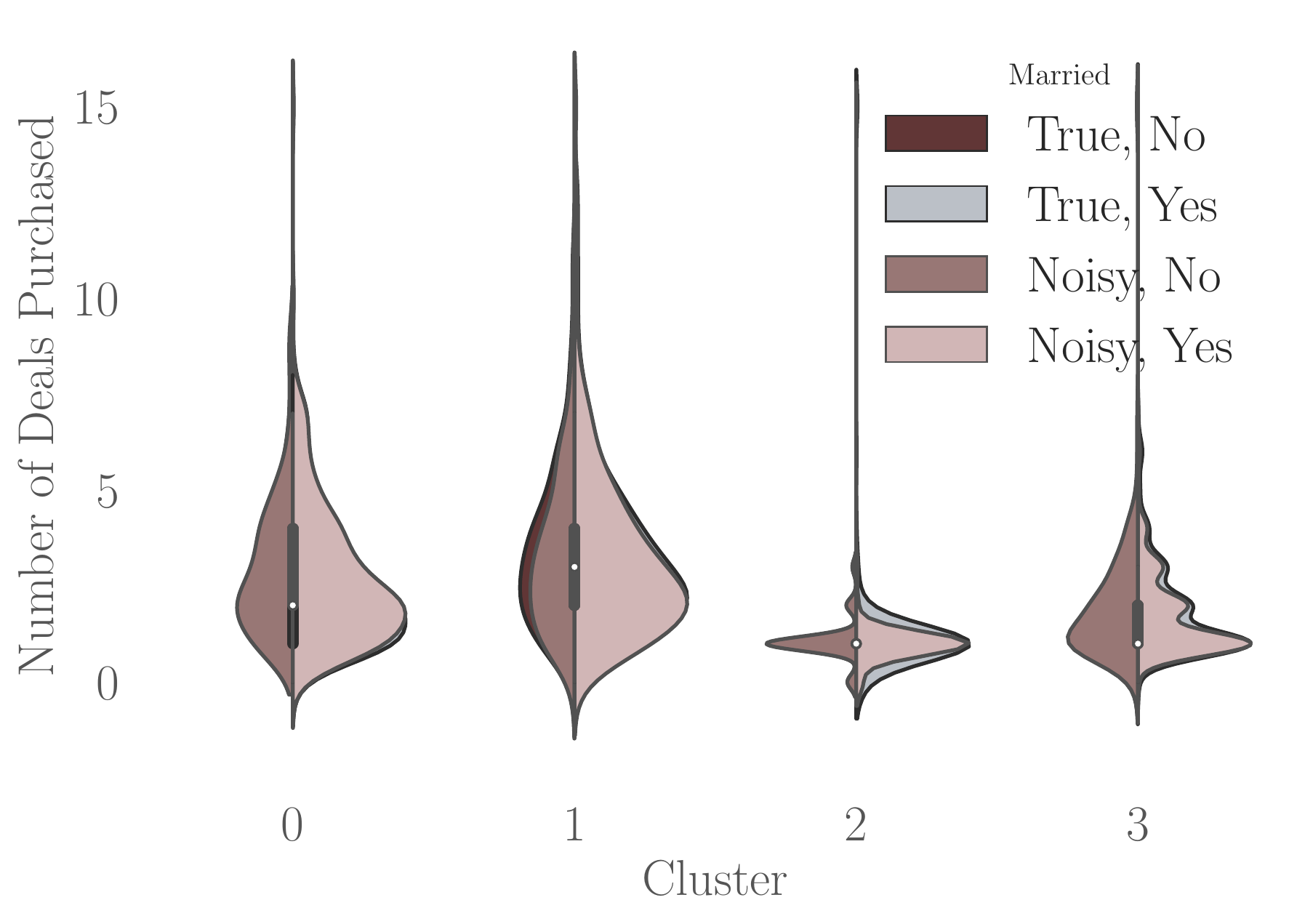}\label{fig:deals}}\\
     \subfloat[][]{\includegraphics[width=0.31\textwidth]{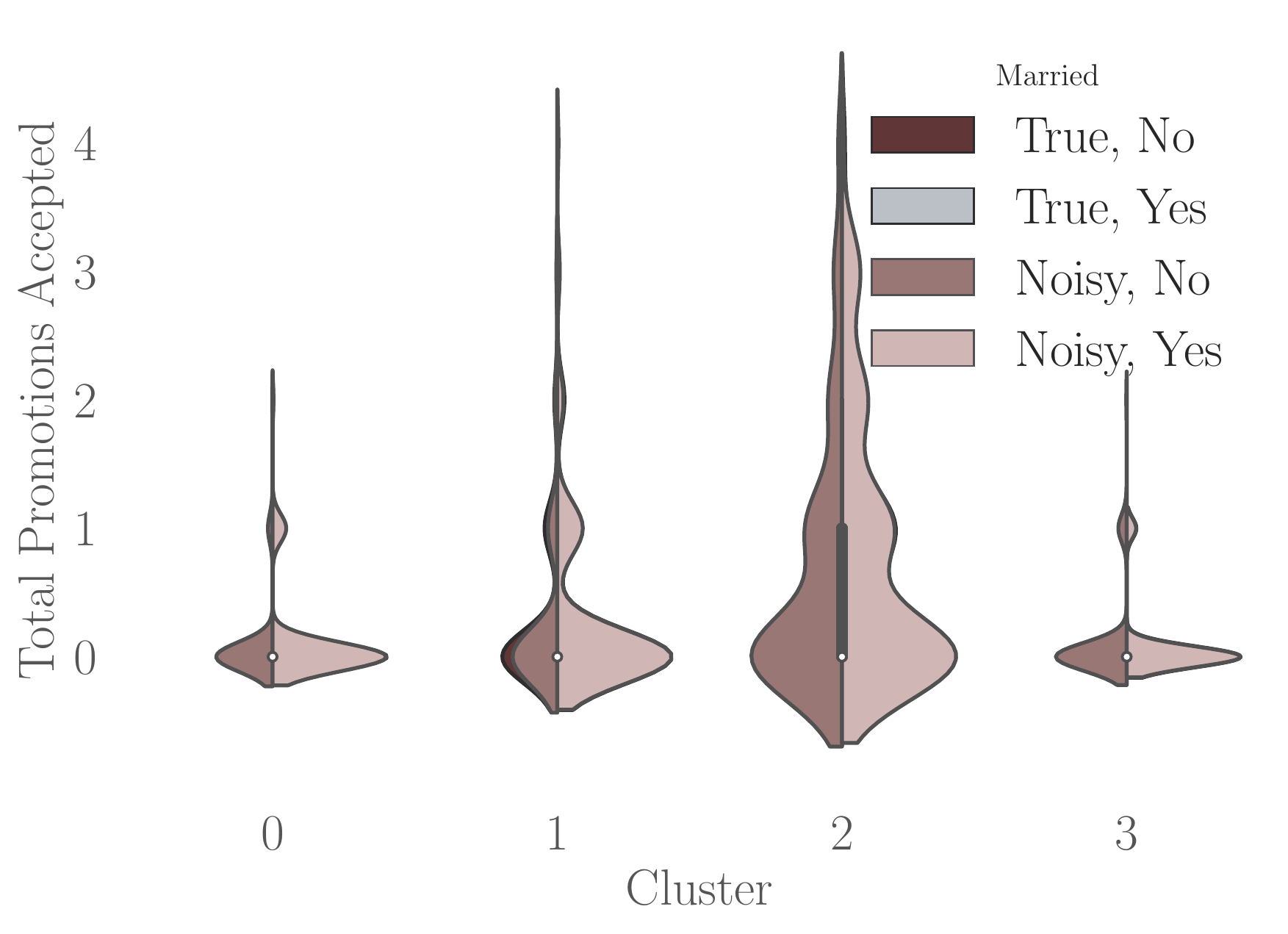}\label{fig:promos}}
     \caption{True and Noisy violin plots of (a) `Number of Deals Purchased' and (b) `Promotions accepted' across clusters.}
     \vspace{-1em}
     \label{fig:inf}
\end{figure}

\section{CONCLUSION}\label{sec:conclusion}
In this paper, we proposed and analyzed a differentially private randomized mechanism for the $K$-means clustering query. The method consisted of adding Gaussian noise with an optimum covariance. The method outperforms the traditional Gaussian noise mechanism, and existing iterative DP clustering methods, as shown via numerical simulations against a marketing campaign dataset. Finally, we show that the mechanism preserves the count distributions of the various metrics of the dataset, thereby leading to accurate inferences (relative to inferences drawn from non-noisy clustering). 

\vfill\pagebreak
\printbibliography

\end{document}